\title{Asynchronous iterations in ultrametric spaces}
\author{Alexander J. T. Gurney}
\date{\today}
\newtheorem{theorem}{Theorem}
\newtheorem{lemma}[theorem]{Lemma}
\theoremstyle{definition}
\newtheorem{definition}{Definition}
\newcommand{\newterm}[1]{\emph{#1}}
\newcommand{\setst}[2]{\ensuremath{\left\{#1\;\middle|\;#2\right\}}}
\newcommand{\set}[1]{\ensuremath{\left\{#1\right\}}}
\newcommand{\setsize}[1]{\ensuremath{\left|#1\right|}}
\newcommand{\ball}[2]{\ensuremath{B(#2; #1)}}
\newcommand{\norm}[1]{\ensuremath{\left\|#1\right\|}}
\newcommand{\Time}{\ensuremath{T}}
\newcommand{\NN}{\mathbb{N}}
\newcommand{\RRpos}{\mathbb{R}_{\mathord{\ge} 0}}
\newcommand{\QQpos}{\mathbb{Q}_{\mathord{\ge} 0}}
\begin{document}

\maketitle

\begin{abstract}
Some iterative calculations can be carried out by parallel
communicating processors, and yield the same results whether or not
the processors are synchronized.
We show that this is the case if and
only if the iteration is a contraction that is strict on orbits,
with respect to an ultrametric defined on the state space.
The maximum number of independent processors is given by the dimension
of the space.

We apply this theorem to interdomain routing, and are able to provide
two advances over the previous state of the art.
Firstly, multipath routing problems have unique solutions, if certain
conditions are satisfied that are analogous to known correctness
conditions for the single-path case.
Secondly, these solutions can be computed asynchronously in a variety
of ways, which go beyond methods that are currently used.

\end{abstract}

\clearpage

\section{Introduction}
\label{sec:intro}

The theory of \emph{asynchronous iterations}
is concerned with the problem
of when an iterative algorithm
can be implemented on a set of communicating
processors, without explicit synchronization,
and yet still compute the same result.
It is known that
for this to be possible,
certain characteristics of the iteration
must hold with respect to its state space:
several different sufficient conditions are known.
These are special cases of a more general result,
which gives a necessary and sufficient condition
for asynchronous safety.
It requires that the state space have a `nested box' structure,
and that synchronous iterations always lead to a more-inward box.
This condition is rather `low-level', and may be difficult to
verify in many cases; equally, the various sufficient conditions
are easier to work with, but do not account for all possibilities.
In addition, much of the prior work on these iterations
assumes that data values are real numbers,
whereas there are many iterative algorithms
that work over other kinds of data.

In this paper,
we reinterpret the `nested box' structure
in terms of a special kind of metric on the state space.
In such an \emph{ultrametric}, the balls around a given point always form
nested boxes.
The Banach fixed point theorem for ultrametric spaces,
which states that a self-map of the space that is contractive and
strict on orbits must have a unique fixed point,
is precisely the theorem needed to prove
asynchronous safety.
That is, the application of this theorem proves not only that there is
a unique fixed point, but also that it can be found by asynchronous
iteration.
Conversely, whenever an iteration is asynchronously safe,
then an ultrametric can be defined with respect to which the iteration
is a contraction of the required kind.
Furthermore,
the degree of potential asynchrony (the number of processors across
which the iteration can be partitioned) is given by the dimension of
the ultrametric space.
This result applies to discrete data as well as to numeric problems.

In the final part of the paper, we apply this new theorem to a problem
in interdomain multipath routing. The presence of a unique fixed
point, let alone the possibility of asynchronous implementation, was
not previously known. Existing sufficient conditions did not cover
this case, but it is dealt with by the new theorem.

\section{Background}
\label{sec:background}

This section explains the 
two separate areas of theory---asynchronous iterations
(Section~\ref{ssec:async})
and ultrametric spaces (Section~\ref{ssec:ultra})---which
are involved in the main results of this paper.

\subsection{Asynchronous iterations}
\label{ssec:async}

There are many algorithms which operate by iteratively applying the
same function to some state vector.
If the state space is $M = M_1 \times M_2 \times \cdots \times M_k$,
then a function $\sigma$ from $M$ to $M$ can be decomposed as the
product of $k$ functions $\sigma_i : M \rightarrow M_i$, where
\[
   \sigma(m_1, m_2, \dots, m_k)
     = (\sigma_1(m_1, m_2, \dots, m_k),
        \sigma_2(m_1, m_2, \dots, m_k),
        \dots,
        \sigma_k(m_1, m_2, \dots, m_k) ).
\]
On a sequential machine, each $\sigma_i$ function must be evaluated in
turn in order to produce the new state vector. But if multiple
processors are available, then each function evaluation can be
assigned to a different processor; once they have all finished, they
can mutually communicate their results so that the next iteration step
can begin. The total execution time for a single iteration is thus
bounded by the time taken for the slowest processor.

It has long been known that for \emph{some} algorithms, synchronous
execution is not required for convergence. That is, the same answer as
produced by the above process can also be generated in a far less
restrictive execution model. In the asynchronous execution of the same
algorithm, the same function is executed on each processor, but the
execution is no longer in lock-step and the input data may come from
a prior round of the iteration.

Asynchronous iteration has been applied
to
many problems:
finding shortest paths~\cite{TsengBertsekasTsitsiklis90},
dynamic programming~\cite{BertsekasTsitsiklis89},
finding fixed points of linear and non-linear operators~\cite{Baudet78},
PageRank~\cite{KolliasGallopoulosSzyld05},
and several others~\cite{FrommerSzyld00}.
Use of the method is often motivated by the
large quantity of data involved, or by
the difficulty in ensuring synchronized execution.
Sometimes, particularly for numeric problems, convergence time can
be improved by dropping the synchrony requirement.
While the theory was originally developed
for numeric iterations~\cite{Baudet78,ChazanMiranker69,Bertsekas83},
results have also been obtained
for iterations on discrete data~\cite{UresinDubois90}.

In the following, we take time to be discrete and linear; the set $T$
contains all time values.

\begin{definition}
For a set $P$ of processors, an \newterm{asynchronous execution
schedule} consists of two functions $\alpha$ and $\beta$, where
\begin{itemize}
 \item $\alpha : \Time \rightarrow 2^P$ yields the set of processors
       which activate at each time step, and
 \item $\beta : \Time \times P \times P \rightarrow \Time$ yields the
       delay between two given processors at each time step; so
       if $\beta(t, i, j) = t'$, then the data from $j$ used at $i$ at
       time $t$ was generated at time $t'$.
\end{itemize}
\end{definition}

\begin{definition}
A schedule $(\alpha, \beta)$ on $P$ is \newterm{admissible} if
\begin{enumerate}
\item For all $i$ in $P$, and $t$ in $\Time$, there exists $t' > t$
      such that $i$ is in $\alpha(t')$.
\item For all $t$ in $T$, and $i$ and $j$ in $P$, $\beta(t, i, j) > t$.
\item For all $i$ and $j$ in $P$, and $t'$ in $\Time$, there exists a
      $t_f$ in $\Time$ such that if $t > t_f$, then $\beta(t, i, j)
      \neq t'$.
\end{enumerate}
\end{definition}

These admissibility conditions may be expressed more informally, as:
every node activates infinitely often;
information does not propagate backwards in time; and
a past data value can only be used finitely often.
Some weaker versions of these conditions have also been considered;
for example, the final axiom may be replaced by an upper bound on the
age of any data item used in a calculation~\cite{UresinDubois90}.

If $\sigma$ is an function from $M$ to $M$, then we can define an
asynchronous iteration corresponding to $\sigma$ for any given
schedule, and for a particular starting point in $M$.
For each $i$, there will be a series of values in $M_i$
generated by the iteration; call these $x_i(t)$ for $t \in T$.
Let $x(t)$ be their product, so $x(t)$ is a vector in $M$.

Let $m$ be a point in $M$,
and
let $(\alpha, \beta)$ be a schedule on a set of $k$ processors.
For each $i$, let $x_i(0) = m_i$.
For $t > 0$ we
define $x_i(t)$ by
\[
   x_i(t) = 
      \begin{cases}
        x_i(t-1) & i \not\in \alpha(t) \\
        \sigma_i\left( 
            x_1(\beta(t, i, 1)), x_2(\beta(t, i, 2)), \dots,
            x_k(\beta(t, i, k)) \right) & i \in \alpha(t).
      \end{cases}
\]
So if processor $i$ does not update at time $t$, its value does not
change; when it does update, it carries out its usual $\sigma_i$
operation, but may use values from further in the past than the
immediately preceding step.

Note that if $\alpha(t) = \set{1, 2, \dots, k}$ for all $t$,
and $\beta(t, i, j) = t - 1$ for all $t$, $i$ and $j$,
then this is just the synchronous iteration from above.

\begin{definition}
Let $M = \prod_{i \in I} M_i$, for some sets $M_i$.
Let $\sigma$ be a function from $M$ to $M$ that has a unique fixed
point $m^{\ast}$.
Then $\sigma$ is an
\newterm{asynchronously contracting operator} (ACO)
if, for any admissible schedule $(\alpha, \beta)$ on $\setsize{I}$,
and any starting point $m$ in $M$,
there is some time $T_{m,\alpha,\beta}$
such that for any $t > T_{m,\alpha,\beta}$,
the state $x(t)$ is equal to $m^{\ast}$.
\end{definition}

These ACOs may also be characterized in terms
of a `nested box' structure on the state space.
Informally, if the synchronous iteration is such
that it always takes points into a more-inward box,
then it is asynchronously safe.
The asynchronous iterations will eventually lead to the
same fixed point, but may deviate from the synchronous
course of execution.

\begin{definition}
A subset $N$ of $M = \prod_{i \in I} M_i$ is a \newterm{box}
if, for each $i$, there is a subset $N_i$ of $M_i$ such
that $N = \prod_{i \in I} N_i$.
\end{definition}

\begin{theorem} \label{thm:aco-box}
An operator $\sigma$ on $M = \prod_{i \in I} M_i$ is an ACO 
if and only if
there exist boxes $\set{C_0, C_1, \dots, C_k}$ in $M$
with the following properties:
\begin{enumerate}
\item $C_0 = \set{m^{\ast}}$ for some $m^{\ast}$ in $M$.
\item $C_k = M$.
\item If $0 \le r < s \le k$, then $C_r \subset C_s$.
\item If $m$ is in $C_{r+1}$, then $\sigma(m)$ is in $C_r$; and if $m$
      is in $C_0$ then $\sigma(m)$ is in $C_0$.
%
\end{enumerate}
\end{theorem}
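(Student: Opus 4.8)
The plan is to treat the two implications separately. The ``if'' direction is a bookkeeping argument built directly on the admissibility axioms; the ``only if'' direction must \emph{manufacture} the box chain from the bare convergence hypothesis, and this is where the real difficulty lies.

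\emph{The ``if'' direction.} Suppose boxes $C_0 \subset C_1 \subset \dots \subset C_k = M$ with the stated properties exist. First I would check that $m^\ast$ is the unique fixed point, which is needed merely for $\sigma$ to qualify as an ACO: $\sigma(m^\ast) \in C_0 = \set{m^\ast}$ forces $\sigma(m^\ast) = m^\ast$, and any fixed point lies in $C_k = M$, hence (by property~4, applied repeatedly) in $C_{k-1}, C_{k-2}, \dots, C_0$, so it equals $m^\ast$. The substance of this direction is a confinement lemma: for an admissible schedule $(\alpha, \beta)$, if there is a time $t_0$ with $x(t) \in C_s$ for all $t \ge t_0$, where $s \ge 1$, then there is a later time $t_1$ with $x(t) \in C_{s-1}$ for all $t \ge t_1$. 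To prove it I would use the third admissibility axiom to choose $t_0'$ past which every input read by any processor was generated at a time at least $t_0$ (only the finitely many timestamps $0, \dots, t_0-1$ have to be retired), then use the first axiom to choose $t_1 > t_0'$ by which every processor has activated at least once after $t_0'$, and then run an induction on $t \ge t_1$: an unchanged coordinate keeps its membership from the previous step, while an updated coordinate is $\sigma_i$ applied to an input vector all of whose entries come from times at least $t_0$ --- hence from the box $C_s$ --- so property~4 places the result in $C_{s-1}$. Since $x(t) \in C_k = M$ for all $t$, applying the lemma $k$ times produces a time past which $x(t) \in C_0 = \set{m^\ast}$, which is precisely the ACO property.

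\emph{The ``only if'' direction.} Suppose $\sigma$ is an ACO with fixed point $m^\ast$. (Running the synchronous schedule already shows that $\sigma^t(m) = m^\ast$ for $t$ large, for every $m$, but by itself this is not enough.) I would build candidate boxes from the top down: $D(0) = M$, and $D(r+1) = \prod_i \pi_i(\sigma(D(r)))$, the smallest box containing the image $\sigma(D(r))$. Then $\sigma(D(r)) \subseteq D(r+1)$ holds by construction, an easy induction gives $D(r+1) \subseteq D(r)$, and $m^\ast = \sigma(m^\ast)$ lies in every $D(r)$. So the $D(r)$ form a descending chain of boxes; it stabilises, at some $D^\ast$ with $\sigma(D^\ast) \subseteq D^\ast$ and, by stabilisation, $\pi_i(\sigma(D^\ast)) = (D^\ast)_i$ for every $i$. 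Provided $D^\ast = \set{m^\ast}$, reversing this chain and deleting repetitions produces boxes $C_0 \subset \dots \subset C_k = M$ satisfying all four conditions, property~4 holding because $\sigma(D(r)) \subseteq D(r+1)$.

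\emph{The main obstacle.} The remaining point, that $D^\ast = \set{m^\ast}$, is the crux, and it is the only place where asynchrony --- rather than mere synchronous convergence --- is genuinely used. If $D^\ast$ contained a point disagreeing with $m^\ast$ in some coordinate $j$, I would aim to contradict the ACO property by exhibiting an admissible schedule under which $x_j(t) \neq m^\ast_j$ for infinitely many $t$. The surjectivity $\pi_i(\sigma(D^\ast)) = (D^\ast)_i$ says that every coordinate value available inside $D^\ast$ can be re-created by a single processor activation from a suitable point of $D^\ast$, and a processor may read each of its inputs from an arbitrary earlier time; the intended schedule exploits this, dividing time into lengthening phases so that each coordinate $\ell$ keeps revisiting all the values of $(D^\ast)_\ell$, and so that in each phase processor $j$ can read a combination of such past values forming a point whose $\sigma_j$-image is not $m^\ast_j$. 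Making this genuinely admissible --- every processor active infinitely often, delays unbounded, each stale value used only finitely often --- and in particular verifying that the bootstrap of realisable values actually fills out $D^\ast$, is the delicate combinatorial step I expect to be hardest. It is also where some finiteness or well-foundedness assumption on the state space must be invoked, since without it the descending chain of boxes need not terminate at $\set{m^\ast}$.
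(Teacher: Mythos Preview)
The paper does not prove this theorem at all: its entire proof is the single line ``See~\cite{UresinDubois90}.'' So there is nothing in the paper to compare your argument against; you have attempted considerably more than the paper does.

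On the substance of your sketch: the ``if'' direction is the standard confinement argument and is correct as you describe it (and is essentially how \"Uresin and Dubois argue it). Your ``only if'' direction has the right architecture---form the smallest box containing $\sigma(D(r))$ and descend---but the two places you flag as delicate are genuine gaps, not mere bookkeeping. First, the chain $D(0)\supseteq D(1)\supseteq\cdots$ need not stabilise without a finiteness hypothesis on $M$; indeed, for $M=\mathbb{N}$ with $\sigma(n)=\max(0,n-1)$ one has $\sigma(\mathbb{N})=\mathbb{N}$, so $D(r)=\mathbb{N}$ for all $r$, yet $\sigma$ is an ACO in the sense of the paper's definition. This shows that the finite-$k$ form of the theorem as stated is not literally true without some boundedness assumption, and \"Uresin and Dubois in fact work under such hypotheses (their formulation also permits infinite box chains). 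Second, your adversarial-schedule construction to rule out $D^\ast\neq\{m^\ast\}$ is the right idea, but ``the bootstrap of realisable values actually fills out $D^\ast$'' is a real lemma that needs proof: one must show inductively that every element of each $(D^\ast)_i$ appears as $x_i(t)$ for infinitely many $t$, and arranging this while respecting admissibility axiom~3 (each stale timestamp used only finitely often) requires a careful diagonal construction. You have correctly located where the work is; you have not yet done it.
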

\begin{proof}
See~\cite{UresinDubois90}.
\end{proof}

This powerful theorem means that we can 
determine asynchronous correctness,
without having to reason about asynchronous processes.
We merely have to verify that the synchronous iteration converges,
and that certain conditions hold for the state space.
Unfortunately, these conditions may be rather tricky to apply
in practice---particularly if one wants to demonstrate that an
operator is \emph{not} an ACO. A further problem is that the low-level
way in which the conditions are stated makes it difficult
to understand the class of ACOs in general.

Several sufficient conditions are known
which imply that the criteria of Theorem~\ref{thm:aco-box} are
fulfilled.
Numeric examples include weighted maximum norms over Banach
spaces~\cite{FrommerSzyld00,ElTarazi82},
$P$-contractions~\cite{Baudet78},
paracontractions~\cite{ElsnerKoltrachtNeumann92,Pott98},
and isotone mappings~\cite{Miellou75}.
For discrete data, there are also results
about isotone mappings~\cite{UresinDubois90}.
All of these impose further requirements on the state space and
iteration;
the situation is especially painful for iterations on discrete
data, since none of the usual real-number apparatus 
is available: we do not have continuity, norms, or even subtraction.
The result of Section~\ref{sec:mainthm} is a necessary and
sufficient condition for iterations on discrete data, so
it covers iterations that are not necessarily isotone but even so
manage to converge to a unique fixed point.


\subsection{Ultrametric spaces}
\label{ssec:ultra}

An ultrametric is a particular kind of `distance' measurement that
differs in several important respects from more familiar examples,
but which also has useful applications.

A conventional metric space allows one to measure the distance between
two points as a real number, with certain intuitive properties being
fulfilled: a zero distance means the points are identical, distances
are symmetric, and the triangle inequality is valid.
In an ultrametric space, the triangle inequality is strengthened.
If $l$, $m$, and $n$ are three points in the space, then they must
form an isosceles triangle: two of the distances $d(l, m)$, $d(m, n)$
and $d(n, l)$ are the same.
Furthermore, the remaining distance can be no longer than the others,
so the triangle is `long and narrow' as opposed to `short and wide'.
For this reason, the spaces are sometimes called \newterm{isosceles spaces}.

\begin{definition}
An \newterm{ultrametric space} $(M, d, \Gamma)$ consists of a set $M$,
a totally
ordered set $\Gamma$ with least element $0$, and a function
$d: M \times M \rightarrow \Gamma$ such that
\begin{enumerate}
 \item $d(m, n) = 0$ if and only if $m = n$
 \item $d(m, n) = d(n, m)$
 \item $d(l, n) \le \max(d(l, m), d(m, n))$
\end{enumerate}
for all $l$, $m$ and $n$ in $M$.
\end{definition}

A canonical example is when
$M$ is the set of all strings over some alphabet.
For distinct $x$ and $y$ in $M$, let
\[
    m(x, y) = \min \setst{i \in \NN}{x_i \neq y_i}
\]
so $m$ yields the first index at which the two strings differ.
Then
\[
   d(x, y) = 
    \begin{cases}
      0            & x = y \\
      2^{-m(x, y)}  & x \neq y
    \end{cases}
\]
is an ultrametric distance function.

Indeed, this example is very close to being universal:
any ultrametric space $M$ is isometric to a space where the
elements are functions
from $\QQpos$ to $M$, and the distance is given by
\[
   d(f, g) = \sup \setst{q \in \QQpos}{f(q) \neq g(q)}.
\]
See \cite{Lemin02} for more details.
Some other examples come from Boolean algebra, where the
distance between two elements can be defined in terms of
their symmetric difference~\cite{PriessCrampeRibenboim96}.

If $h$ is a function from $M$ to $\Gamma\setminus\set{0}$, then
a distance function can be defined by
\[
   d_h(m, n) = 
      \begin{cases}
       0 & m = n\\
       \max(h(m), h(n)) & m \neq n.
      \end{cases}
\]
This is clearly an ultrametric.

\begin{definition}
In an ultrametric space $(M, d, \Gamma)$, the \newterm{ball} about a
point $m$ in $M$, of radius $r$ in $\Gamma$, is the set
\[
   \ball{r}{m} = \setst{n \in M}{d(m, n) \le r}.
\]
\end{definition}

The balls of an ultrametric space have some surprising properties.
Any point in a ball will serve as its center (see the lemma below).
If we have two balls $\ball{r}{n}$ and $\ball{s}{m}$, then
either one is a subset of the other, or they are disjoint.
Consequently, the set of all balls, ordered by inclusion, 
has a tree structure~\cite{Lemin03}.

\begin{lemma}
If $n$ is in $\ball{r}{m}$ then $\ball{r}{n} = \ball{r}{m}$.
\end{lemma}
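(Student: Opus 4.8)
The plan is to exploit the strengthened triangle inequality $d(l,n)\le\max(d(l,m),d(m,n))$ together with the symmetry of $d$, and to avoid any case analysis by first noting that $m$ and $n$ play interchangeable roles. Concretely: since $n\in\ball{r}{m}$ we have $d(m,n)\le r$, and by symmetry $d(n,m)=d(m,n)\le r$ as well, so $m\in\ball{r}{n}$. Hence it suffices to prove a single inclusion, say $\ball{r}{n}\subseteq\ball{r}{m}$, and then obtain the reverse inclusion by simply swapping the names of $m$ and $n$ in the same argument.

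For the inclusion, I would take an arbitrary $l\in\ball{r}{n}$, so that $d(n,l)\le r$. Applying the ultrametric inequality in the form $d(m,l)\le\max(d(m,n),d(n,l))$ and using that both $d(m,n)\le r$ and $d(n,l)\le r$, I conclude $d(m,l)\le r$, i.e. $l\in\ball{r}{m}$. This establishes $\ball{r}{n}\subseteq\ball{r}{m}$; running the identical estimate with the roles of $m$ and $n$ exchanged (which is legitimate precisely because we already checked $m\in\ball{r}{n}$) gives $\ball{r}{m}\subseteq\ball{r}{n}$, and therefore equality.

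I do not anticipate any genuine obstacle here; this is essentially a one-line computation. The only point that deserves care is the order of operations: one must first use symmetry to deduce $m\in\ball{r}{n}$ before invoking the triangle inequality in the opposite direction, since that is what lets the argument close up symmetrically. Note also that no structure on $\Gamma$ is used beyond it being totally ordered, so that $\max$ is well defined; in particular the argument is insensitive to whether $\Gamma$ is $\RRpos$ or some other ordered set.
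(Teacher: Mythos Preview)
Your argument is correct and is essentially identical to the paper's own proof: both note $d(m,n)\le r$, apply the ultrametric inequality $d(m,l)\le\max(d(m,n),d(n,l))$ to get one inclusion, and then invoke symmetry of $m$ and $n$ for the other. The only cosmetic difference is that the paper writes the estimate once for a generic point and says ``the same argument applies if $m$ and $n$ are interchanged,'' whereas you spell out the symmetry step explicitly.
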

\begin{proof}
If $n$ is in $\ball{r}{m}$ then $d(m, n) \le r$.
Then for any $z$ in $M$, 
\[
    d(m, z) \le \max(d(m, n), d(n, z))
\]
so if $d(n, z) \le r$ then $d(m, z) \le r$ as well; and the same
argument applies if $m$ and $n$ are interchanged. Hence the two balls
have the same content.
\end{proof}

We will need some notion of completeness of an ultrametric space, in
order to guarantee the existence of fixed points. Otherwise, it could
be that for certain iterations, the sequence of values converges to a
point that is not in the space.
The following definition
is sufficient
to ensure that the Banach fixed-point theorem actually yields
a fixed point.

\begin{definition}
An ultrametric space is \newterm{spherically complete} if every chain
of balls has nonempty intersection.
\end{definition}

Simple examples of spherically complete spaces include any space that is
finite, and any for which the image of $d$ is a finite subset of
$\Gamma$.
In both of these cases, any chain of balls is guaranteed to be finite,
and its intersection is then equal to the smallest ball in the chain.

The Banach theorem for ultrametric spaces can be made
to work for several different kinds of contracting operator.
The general idea is that, with respect to the ultrametric
distance,
each application of the operator brings points closer together.
Eventually, the entire space is contracted into a single point, which
is the desired unique fixed point.

\begin{definition}
Let $\sigma$ be a function from $M$ to $M$.
If $(M, d, \Gamma)$ is an ultrametric space, then $\sigma$ is (with
respect to $d$):
\begin{itemize}
\item a \newterm{contraction} if $d(\sigma(m), \sigma(n)) \le d(m, n)$
      for all $m$ and $n$ in $M$
\item a \newterm{strict contraction} if $d(\sigma(m), \sigma(n)) < d(m,
      n)$ for all distinct $m$ and $n$ in $M$
\item a \newterm{strict contraction on orbits} if
 $d(\sigma(m), \sigma^2(m)) < d(m, \sigma(m))$, or $m = \sigma(m)$,
 for all $m$ in $M$.
\end{itemize}
\end{definition}

\noindent
Note that if $\sigma$ is a strict contraction,
then it is necessarily a contraction that is strict on orbits.

\begin{theorem}
If $\sigma$ is a function from $M$ to $M$, and $(M, d, \Gamma)$ is a
spherically complete
ultrametric space with respect to which $\sigma$ is
a contraction that is strictly contracting on orbits,
then $\sigma$ has a unique fixed point.
\end{theorem}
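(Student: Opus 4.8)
The plan is to prove uniqueness and existence separately, with existence being the substantive part. Uniqueness is immediate from the strict-contraction-on-orbits hypothesis in the stronger guise it is not quite given, so instead I would argue directly: if $m$ and $n$ are both fixed points, then $d(\sigma(m), \sigma(n)) = d(m,n)$; but this alone does not contradict the orbit condition, so I would instead use the following observation. Suppose $m \neq n$ are fixed points. Consider the orbit condition at $m$: since $m = \sigma(m)$, it gives no information. So uniqueness actually requires a short separate argument. The cleanest route: show first that $\sigma$ has \emph{at most one} fixed point using a minimality argument on $d(m,n)$ among pairs of fixed points — but since there is only a distance function into a totally ordered set with no well-ordering assumed, I would instead lean on the existence construction itself to pin down the fixed point as the unique intersection point of a canonical chain of balls, and then observe any fixed point must lie in every ball of that chain.

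For existence, the main idea is to build a maximal chain of balls, each invariant under $\sigma$, whose radii strictly decrease, and to show the intersection is a single point fixed by $\sigma$. Concretely: start with an arbitrary $m_0 \in M$ and set $m_{t+1} = \sigma(m_t)$. Let $r_t = d(m_t, m_{t+1})$; by the orbit hypothesis, either $m_t$ is fixed (and we are done) or $r_{t+1} < r_t$, so $(r_t)$ is a strictly decreasing sequence in $\Gamma$. Using the contraction property together with the ultrametric inequality, I would show that the balls $B(r_t; m_t)$ are nested: $B(r_{t+1}; m_{t+1}) \subseteq B(r_t; m_t)$, because $d(m_t, m_{t+2}) \le \max(r_t, r_{t+1}) = r_t$ places $m_{t+1}$ and its ball inside $B(r_t; m_t)$ (invoking the earlier lemma that any point of a ball is a center). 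Spherical completeness then gives a point $m^\ast$ in $\bigcap_t B(r_t; m_t)$. To see $m^\ast$ is fixed: $\sigma$ is a contraction, so $d(\sigma(m^\ast), m_{t+1}) = d(\sigma(m^\ast), \sigma(m_t)) \le d(m^\ast, m_t) \le r_{t-1}$ (roughly), so $\sigma(m^\ast)$ lies in essentially the same chain of balls as $m^\ast$; then $d(m^\ast, \sigma(m^\ast)) \le r_t$ for all $t$. If $m^\ast \neq \sigma(m^\ast)$, the orbit condition applied at $m^\ast$ forces a value strictly below $d(m^\ast,\sigma(m^\ast))$ for the next step, yet all of this is trapped beneath every $r_t$; to close this off I would need to handle the possibility that $\inf_t r_t$ is itself a positive element of $\Gamma$ not attained — and this is exactly where spherical completeness is doing its work, guaranteeing the chain does not "stop short" at a ball of positive radius with more than one point forced into it.

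With $m^\ast$ fixed, uniqueness follows: if $n$ is any fixed point, then $d(m^\ast, n) = d(\sigma(m^\ast), \sigma(n))$, and applying the contraction property repeatedly gives no decrease, so I instead argue that $n \in B(r_t; m_t)$ for every $t$ — since $d(n, m_{t+1}) = d(\sigma(n), \sigma(m_t)) \le d(n, m_t)$, an induction starting from $d(n, m_1) \le d(n, m_0)$ keeps $n$ within the shrinking balls, forcing $d(m^\ast, n) \le r_t$ for all $t$, and then a final appeal to the orbit condition at the pair (or to the fact that the common refinement ball must be a singleton) yields $m^\ast = n$.

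I expect the main obstacle to be the last step of the existence argument: converting "$d(m^\ast, \sigma(m^\ast)) \le r_t$ for all $t$" into "$d(m^\ast, \sigma(m^\ast)) = 0$". Because $\Gamma$ is merely a totally ordered set, the sequence $(r_t)$ need not converge to $0$, so one cannot simply take a limit. The resolution must use spherical completeness more carefully — probably by considering the ball of radius $\rho = \inf_t r_t$ (or a suitable ball in the chain realizing the infimum behavior) and showing that $\sigma$ maps that ball strictly inside itself, contradicting minimality unless $\rho = 0$; making the "strictly inside" precise against an arbitrary ordered value set is the delicate point.
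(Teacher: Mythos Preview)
The paper does not prove this theorem itself: it simply cites Priess-Crampe and Priess-Crampe--Ribenboim, so there is no argument in the paper to compare yours against. Evaluating your proposal on its own merits, the existence argument has a genuine gap, precisely the one you flag at the end. Iterating from $m_0$ gives a countable chain of balls $B(r_t; m_t)$ with $r_t = d(m_t, m_{t+1})$ strictly decreasing, and spherical completeness yields $m^\ast$ in the intersection with $d(m^\ast, \sigma(m^\ast)) \le r_t$ for all $t$. But $\Gamma$ is an arbitrary totally ordered set: the $r_t$ may decrease forever without approaching $0$ (indeed $\inf_t r_t$ need not exist in $\Gamma$ at all), so you cannot conclude $m^\ast = \sigma(m^\ast)$. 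Restarting the iteration from $m^\ast$ only defers the problem to the next stage; what is really needed is a transfinite construction, and that is how the cited proofs proceed. The standard argument applies Zorn's lemma to the family of balls $B_x = B\bigl(d(x,\sigma(x));\, x\bigr)$ for $x \ne \sigma(x)$: each such ball is $\sigma$-invariant; any chain has a lower bound (take $z$ in the intersection, which is nonempty by spherical completeness---either $z = \sigma(z)$ and we are done, or $B_z$ lies inside every ball of the chain); and a minimal $B_w$ is impossible since strict contraction on orbits gives $B_{\sigma(w)} \subsetneq B_w$.

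A separate caution on uniqueness: with the paper's stated definitions, ``contraction that is strict on orbits'' does \emph{not} by itself force the fixed point to be unique. The identity map on any two-point ultrametric space is a contraction and vacuously strict on orbits (every point is already fixed), yet has two fixed points. The cited literature typically obtains uniqueness only under the stronger strict-contraction hypothesis, and the paper's downstream uses lean on additional structure. Your own uniqueness sketch---trapping a second fixed point $n$ inside every $B(r_t; m_t)$ and then passing to the limit---inherits the same problem as the existence argument and would not close even if that part were repaired.
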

\begin{proof}
See~\cite{PriessCrampe90} and~\cite{PriessCrampeRibenboim92}.
\end{proof}

We now demonstrate
that ultrametric spaces can be combined
via a product operation.
Furthermore, in the resulting space,
every ball is a box.
This provides the desired connection
with the theory of asynchronous iterations:
the balls about the fixed point will be precisely the
boxes demanded by the asynchronous iteration theorem.

\begin{definition}
Given ultrametric spaces $(M_i, d_i, \Gamma)$ for $1 \le i \le k$,
define the \newterm{ultrametric product space} $(M, d, \Gamma)$ by
\begin{align*}
 M & = \prod_{1 \le i \le k} M_i \\
 d(m, n) & = \max_{1 \le i \le k} d_i(m_i, n_i)
\end{align*}
where $m$ and $n$ are vectors in $M$.
\end{definition}

We will refer to $k$ as the \newterm{dimension} of $M$.
This usage is appropriate for the case when $M$ is a real or
complex vector space. When $M$ is discrete, its topological dimension
is zero; but here, we will carry on using the term `dimension'
for the number of components of the product.

\begin{lemma}
In an ultrametric product space, every ball is a box. That is,
if $(M, d, \Gamma)$ is the product of $(M_i, d_i, \Gamma)$
for $1 \le i \le k$,
then for any $m$ in $M$ and $r$ in $\Gamma$,
\[
   \ball{r}{m} = \prod_{1 \le i \le k} B_i
\]
where each set $B_i$ is a subset of $M_i$.
\end{lemma}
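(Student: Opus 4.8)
The plan is to produce the sets $B_i$ by the obvious candidate: take $B_i = \ball{r}{m_i}$, the ball of radius $r$ about the $i$-th coordinate $m_i$ in the factor space $(M_i, d_i, \Gamma)$. I claim that with this choice $\ball{r}{m} = \prod_{1 \le i \le k} B_i$, and the proof amounts to unwinding the definition of the product distance.

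First I would recall that by definition $n \in \ball{r}{m}$ exactly when $d(m,n) \le r$, which by the definition of the ultrametric product means $\max_{1 \le i \le k} d_i(m_i, n_i) \le r$. The one elementary fact to invoke is that, in a totally ordered set $\Gamma$, a maximum of finitely many elements is $\le r$ if and only if every one of those elements is $\le r$. Applying this, $d(m,n) \le r$ holds if and only if $d_i(m_i, n_i) \le r$ for every $i$, which is precisely the condition that $n_i \in \ball{r}{m_i} = B_i$ for every $i$, i.e.\ that $n \in \prod_{1 \le i \le k} B_i$. Since the chain of equivalences runs in both directions, we get $\ball{r}{m} \subseteq \prod_i B_i$ and $\prod_i B_i \subseteq \ball{r}{m}$ simultaneously, which is the claimed equality; each $B_i$ is visibly a subset of $M_i$.

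There is essentially no obstacle here, and in particular the strong triangle inequality is not needed for this lemma: all that is used is that $d$ is defined as a coordinatewise maximum and that $\Gamma$ is totally ordered, so that the condition ``$\max \le r$'' decomposes across coordinates. The only point worth stating with a little care is finiteness of $k$, which guarantees the maximum is attained and makes the equivalence ``$\max \le r \iff$ each term $\le r$'' immediate; were one to work with an infinite product, $\max$ would be replaced by $\sup$ and the same argument would still apply. (One could, if desired, also note that by the earlier lemma any point of $B_i$ serves as its centre, so the decomposition does not depend on the particular representative $m$ chosen for the ball, but this is not required for the statement.)
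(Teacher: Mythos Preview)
Your proof is correct and takes essentially the same approach as the paper: define $B_i = \ball{r}{m_i}$ in each factor, then use the equivalence $\max_i d_i(m_i, n_i) \le r \iff \forall i\, d_i(m_i, n_i) \le r$ to obtain the equality of sets. Your additional remarks (that the ultrametric inequality is not actually needed here, and the note on finiteness of $k$) are accurate but go slightly beyond what the paper records.
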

\begin{proof}
For each $i$, let $B_i$ be $\ball{r}{m_i}$ in $M_i$.
For any element $x$ of $M$, we have:
\begin{align*}
       & x \in \ball{r}{m} \\
  \iff & r \ge d(x, m) \\
  \iff & r \ge \max_{1 \le i \le k} d_i(x_i, m_i) \\
  \iff & \forall i: 1 \le i \le k \implies r \ge d_i(x_i, m_i) \\
  \iff & \forall i: 1 \le i \le k \implies x_i \in \ball{r}{m_i} \\
  \iff & \forall i: 1 \le i \le k \implies x_i \in B_i \\
  \iff & x \in \prod_{1 \le i \le k} B_i. \qedhere
\end{align*}

\end{proof}


\section{The main result}
\label{sec:mainthm}

This section is dedicated to proving the theorem below,
which provides a necessary and sufficient condition
for an operator to be an ACO,
in terms of an ultrametric structure on the state space.

\begin{theorem}
Let $M$ be a set, and $\sigma : M \rightarrow M$ a function from $M$
to $M$. Then $\sigma$ is an asynchronously contracting operator on $M$
if and only if there exists an ultrametric $d$ on $M$, with finite image,
and with respect to
which $\sigma$ is a contraction that is strict on orbits.
\end{theorem}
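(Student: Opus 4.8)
The plan is to prove the two implications separately, using Theorem~\ref{thm:aco-box} as the bridge in both directions: a box structure on the state space translates to an ultrametric via a ``depth'' function, and an ultrametric with finite image translates to a box structure via the balls around the fixed point.

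For the direction ``ultrametric $\Rightarrow$ ACO'', suppose $d$ is an ultrametric on $M$ with finite image with respect to which $\sigma$ is a contraction that is strict on orbits. Since the image of $d$ is finite the space is spherically complete, so the Banach theorem for ultrametric spaces gives a unique fixed point $m^{\ast}$. I would first observe that every orbit $m, \sigma(m), \sigma^{2}(m), \dots$ reaches $m^{\ast}$ after finitely many steps: strictness on orbits forces $d(\sigma^{n}(m), \sigma^{n+1}(m))$ to decrease strictly until two consecutive iterates coincide, and a strictly decreasing sequence in a finite set must terminate, so some $\sigma^{n}(m)$ is fixed and hence equals $m^{\ast}$. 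The key estimate is then: if $m \neq m^{\ast}$ then $d(m^{\ast}, \sigma(m)) < d(m^{\ast}, m)$. To see this, run along the orbit $m = m_{0}, m_{1}, \dots, m_{\tau} = m^{\ast}$; the step distances $d(m_{0}, m_{1}) > d(m_{1}, m_{2}) > \dots$ decrease strictly, so $d(m_{0}, m_{1})$ is the largest, the iterated strong triangle inequality gives $d(m^{\ast}, m) \le d(m_{0}, m_{1})$ and $d(m^{\ast}, \sigma(m)) < d(m_{0}, m_{1})$, and applying the inequality to the triangle $m, \sigma(m), m^{\ast}$ (whose other two sides are both $< d(m_{0}, m_{1})$) forces $d(m^{\ast}, m) = d(m_{0}, m_{1})$, hence $d(m^{\ast}, \sigma(m)) < d(m^{\ast}, m)$. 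Now let $0 = s_{0} < s_{1} < \dots < s_{\ell}$ enumerate the finitely many values attained by $d(m^{\ast}, \cdot)$ and set $C_{j} = \ball{s_{j}}{m^{\ast}}$. Then $C_{0} = \set{m^{\ast}}$, $C_{\ell} = M$, each inclusion $C_{r} \subset C_{s}$ is strict because $s_{s}$ is attained by some point lying in $C_{s} \setminus C_{r}$, and the key estimate shows $\sigma(C_{r+1}) \subseteq C_{r}$ (a point of $C_{r+1}$ other than $m^{\ast}$ goes strictly closer to $m^{\ast}$, hence to distance $\le s_{r}$, and $m^{\ast}$ is fixed). Regarding $M$ as a one-component product, each $C_{j}$ is a box (or, if $M$ already carries a compatible product ultrametric, this follows from the ``every ball is a box'' lemma), so Theorem~\ref{thm:aco-box} applies and $\sigma$ is an ACO.

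For the converse ``ACO $\Rightarrow$ ultrametric'', I would apply Theorem~\ref{thm:aco-box} to get boxes $C_{0} \subset C_{1} \subset \dots \subset C_{k}$ with $C_{0} = \set{m^{\ast}}$, $C_{k} = M$, and $\sigma(C_{r+1}) \subseteq C_{r}$. Define $\delta : M \to \set{0, 1, \dots, k}$ by $\delta(m) = \min \setst{j}{m \in C_{j}}$, which is well defined since $m \in C_{k}$ and satisfies $\delta(m) = 0$ iff $m = m^{\ast}$; property~4 gives $\delta(\sigma(m)) \le \delta(m) - 1$ whenever $\delta(m) \ge 1$, and $\sigma$ fixes $m^{\ast}$. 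Set $d(m, n) = 0$ for $m = n$ and $d(m, n) = \max(\delta(m), \delta(n))$ otherwise, with $\Gamma = \set{0, 1, \dots, k}$. This is an ultrametric of the $d_{h}$ type described in Section~\ref{ssec:ultra} (the only point with $\delta = 0$ is $m^{\ast}$, so $d(m, n) = 0$ still forces $m = n$), and its image is finite. That $\sigma$ is a contraction is immediate from $\delta(\sigma(m)) \le \delta(m)$; and it is strict on orbits because if $m \neq \sigma(m)$ then $\delta(m) \ge 1$, so $d(m, \sigma(m)) = \delta(m)$ while $\delta(\sigma(m)), \delta(\sigma^{2}(m)) \le \delta(m) - 1$, giving $d(\sigma(m), \sigma^{2}(m)) \le \delta(m) - 1 < d(m, \sigma(m))$.

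The main obstacle is the key estimate in the first direction: passing from ``$\sigma$ is merely a contraction'', which only yields $d(m^{\ast}, \sigma(m)) \le d(m^{\ast}, m)$, to the strict inequality $d(m^{\ast}, \sigma(m)) < d(m^{\ast}, m)$ that is needed to produce properly nested boxes. This is precisely where finiteness of the image and strictness on orbits are both indispensable, and the short ultrametric computation along the orbit is the crux of the whole argument; the box bookkeeping, and the entire converse, are then routine.
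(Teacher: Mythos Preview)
Your proposal is correct and follows the paper's approach exactly: balls about $m^{\ast}$ supply the nested boxes for Theorem~\ref{thm:aco-box} in one direction, and the depth function $C(m)=\min\setst{j}{m\in C_j}$ supplies the ultrametric $d(m,n)=\max(C(m),C(n))$ in the other. Your derivation of the key estimate $d(m^{\ast},\sigma(m))<d(m^{\ast},m)$ along the orbit is actually more careful than the paper's bare assertion that $d(m,\sigma(m))=d(m,m^{\ast})$; the only slip is that in your triangle step the side $d(m,m^{\ast})$ is a priori only $\le d(m_0,m_1)$ rather than strictly less, but the strong triangle inequality applied to $d(m,\sigma(m))\le\max\bigl(d(m,m^{\ast}),d(\sigma(m),m^{\ast})\bigr)$ still forces equality, so your conclusion stands.
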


We will prove the two directions of this theorem separately.
In order to establish this result, we will need to use the
property that an operator on $M$ is asynchronously contracting if and only if
a series of nested boxes in $M$ exist, with certain properties.
The existence of an ultrametric will provide these boxes, and
conversely, given a series of boxes, we can define a suitable
ultrametric.

\begin{lemma}
If $(M, d, \Gamma)$ is a ultrametric space, with respect to which
$\sigma$ is a contraction that is strict on orbits, 
and $\Gamma$ is finite, then a series of
boxes exists that has the required properties.
\end{lemma}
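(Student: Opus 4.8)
The plan is to construct the boxes $C_0, \dots, C_k$ demanded by Theorem~\ref{thm:aco-box} as the closed balls of $d$ centred at the fixed point of $\sigma$, listed by increasing radius. Since $\Gamma$ is finite, the image of $d$ is finite, so $(M, d, \Gamma)$ is spherically complete; hence the Banach fixed-point theorem for ultrametric spaces applies and $\sigma$ has a unique fixed point $m^{\ast}$. This $m^{\ast}$ will be the common centre of all the boxes, and its uniqueness will be used several times below.

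The crux is to upgrade the orbit-wise strictness of $\sigma$ into strictness measured against $m^{\ast}$. First I would show that every orbit reaches $m^{\ast}$ after finitely many steps: writing $a_n = d(\sigma^n(m), \sigma^{n+1}(m))$, the definition of strict contraction on orbits applied at $\sigma^n(m)$ gives $a_{n+1} < a_n$ whenever $a_n \neq 0$, so $(a_n)$ is strictly decreasing in the finite set $\Gamma$ until it hits $0$; once $a_N = 0$, the point $\sigma^N(m)$ is a fixed point and hence equals $m^{\ast}$. Next I would prove the identity $d(m, m^{\ast}) = d(m, \sigma(m))$ for all $m$: the inequality $d(m, m^{\ast}) \le \max_{0 \le i < N} a_i = a_0$ comes from chaining the ultrametric inequality along the orbit (the maximum is the first term since the $a_i$ decrease), while $d(m, \sigma(m)) \le \max(d(m, m^{\ast}), d(\sigma(m^{\ast}), \sigma(m))) \le d(m, m^{\ast})$ comes from the ultrametric inequality and contractivity. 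Evaluating this identity at $\sigma(m)$ turns the hypothesis $d(\sigma(m), \sigma^2(m)) < d(m, \sigma(m))$ into $d(\sigma(m), m^{\ast}) < d(m, m^{\ast})$ for every $m \neq m^{\ast}$ — that is, $\sigma$ strictly decreases the distance to $m^{\ast}$.

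With this in hand the boxes almost define themselves. Enumerate the finitely many values of $d(m^{\ast}, \cdot)$ on $M$ as $0 = \gamma_0 < \gamma_1 < \cdots < \gamma_k$ and put $C_r = \ball{\gamma_r}{m^{\ast}}$. Then $C_0 = \set{m^{\ast}}$ and $C_k = M$ by the choice of $\gamma_0$ and $\gamma_k$; the inclusions $C_r \subset C_s$ for $r < s$ are strict because the value $\gamma_s$ is actually attained; and for $m \in C_{r+1}$ with $m \neq m^{\ast}$ the attained value $d(m^{\ast}, \sigma(m))$ is strictly below $\gamma_{r+1}$, hence at most $\gamma_r$, so $\sigma(m) \in C_r$ (the cases $m = m^{\ast}$ and $m \in C_0$ being immediate from $\sigma(m^{\ast}) = m^{\ast}$). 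Finally, since $d$ is the product ultrametric on $M = \prod_i M_i$, each ball $C_r$ is a box by the earlier lemma that balls of product ultrametric spaces are boxes. This is exactly the family required by Theorem~\ref{thm:aco-box}.

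The main obstacle is the identity $d(m, m^{\ast}) = d(m, \sigma(m))$ together with the underlying fact that orbits land on $m^{\ast}$ in finitely many steps; this is where finiteness of $\Gamma$ is indispensable, and it is what lets the weak `strict on orbits' hypothesis do the job that would otherwise need full strict contractivity when sending a point of $C_{r+1}$ into $C_r$. Everything else — spherical completeness from the finite image, the box property from the product structure, and the bookkeeping with the ordered radii $\gamma_r$ — is routine.
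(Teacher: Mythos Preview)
Your proposal is correct and follows essentially the same route as the paper: take the unique fixed point $m^{\ast}$ furnished by the ultrametric Banach theorem, use the balls about $m^{\ast}$ (indexed by the finitely many attained radii) as the boxes, and show $\sigma$ moves every point to a strictly smaller ball via the identity $d(m,m^{\ast}) = d(m,\sigma(m))$. In fact you supply an argument for that identity (finite orbits landing on $m^{\ast}$, then the two ultrametric estimates) which the paper simply asserts, so your write-up is the more complete of the two.
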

\begin{proof}
From the theory of ultrametric spaces, the contraction conditions on
$\sigma$ provide that it has a unique fixed point in $M$; call this
$m^{\ast}$.
For every possible radius $r$ in $\Gamma$, there is a ball of
radius $r$ about $m^{\ast}$:
\[
   \ball{r}{m^{\ast}} = \setst{m \in M}{d(m^{\ast}, m) \le r}.
\]
These balls will be the required boxes.

Firstly, $\ball{0}{m^{\ast}} = \set{m^{\ast}}$, since no other points are at
distance zero from $m^{\ast}$ itself.

Next, due to finiteness of $\Gamma$,
there must be some minimal radius $k$ such that $\ball{k}{m^{\ast}} = M$.
Let $R$ be the set $\setst{r}{0 \le r \le k}$.
Clearly, if $r_1 < r_2$ for some $r_1$ and $r_2$ in $R$, then
$\ball{r_1}{m^{\ast}} \subseteq \ball{r_2}{m^{\ast}}$. Because some of these 
balls may coincide, despite having different radii, we define a subset
$S$ of $R$ such that
\begin{enumerate}
\item if $r$ is in $R$ then there is some $s$ in $S$ such that 
      $s \le r$ and $\ball{r}{m^{\ast}} = \ball{s}{m^{\ast}}$; and
\item for any $s_1$ and $s_2$ in $S$, $\ball{s_1}{m^{\ast}} \neq \ball{s_2}{m^{\ast}}$.
\end{enumerate}
This $S$ now yields the required sequence of boxes; it is well-defined
since $R$ is finite.

It remains to show that $\sigma$ fulfils the required property.
For any $m$ in $M$, we have the relationship
\[
    d(m, \sigma(m)) = d(m, m^{\ast})
\]
and, if $m$ is not equal to $\sigma(m)$,
\[
    d(m, \sigma(m)) > d(\sigma(m), \sigma^2(m)).
\]
Hence
\[
    d(m, m^{\ast}) > d(\sigma(m), m^{\ast})
\]
unless $m$ and $\sigma(m)$ are equal (in which case they are both
equal to $m^{\ast}$ itself).
Consequently, application of $\sigma$ always takes a point to a
more-inward ball, unless that point is the fixed point already.
\end{proof}

The next step is to prove the converse:
that if an operator is asynchronously contracting,
then there is an ultrametric with respect to which the operator
is a contraction that is strict on orbits.

\begin{lemma}
Let $M$ be a set endowed with an asynchronously contracting operator
$\sigma$. Then there is an ultrametric $d$ on $M$ with respect to
which $\sigma$ is a contraction that is strictly contracting on orbits.
\end{lemma}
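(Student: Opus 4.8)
The plan is to invoke Theorem~\ref{thm:aco-box} to extract the nested boxes $C_0 \subset C_1 \subset \cdots \subset C_k$ guaranteed by the hypothesis that $\sigma$ is an ACO, and then to turn this chain directly into an ultrametric. The key observation is that the boxes give every point a natural ``rank'': for $m$ in $M$, let $h(m)$ be the least index $r$ such that $m \in C_r$. Since $C_k = M$ this is well-defined, and $h(m) = 0$ exactly when $m = m^{\ast}$. I would then try to define a distance essentially of the form $d(m,n) = \max(h(m), h(n))$ when $m \neq n$ and $0$ otherwise, taking $\Gamma = \{0, 1, \dots, k\}$ with its usual order. The earlier remark in Section~\ref{ssec:ultra} (the construction $d_h$) already shows that any such $d_h$ is automatically an ultrametric, and its image is finite, so the metric-space axioms come for free.

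With that definition in hand, I would check the two contraction properties. For the orbit condition: if $m \neq \sigma(m)$ we need $d(\sigma(m), \sigma^2(m)) < d(m, \sigma(m))$, i.e. $\max(h(\sigma m), h(\sigma^2 m)) < \max(h(m), h(\sigma m))$. Property~(4) of the boxes says that $m \in C_{r+1}$ implies $\sigma(m) \in C_r$, so $h(\sigma(m)) \le h(m) - 1$ whenever $h(m) \ge 1$ (and $h(\sigma(m)) = 0$ when $h(m) = 0$); in all cases $h(\sigma(m)) < h(m)$ unless $m = m^{\ast}$. Applying this twice, both $h(\sigma m)$ and $h(\sigma^2 m)$ are strictly below $h(m)$, while $h(m) \le \max(h(m), h(\sigma m))$ is one of the terms on the right, so the strict inequality on orbits follows. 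The plain contraction property $d(\sigma m, \sigma n) \le d(m,n)$ is similar but needs care: we need $\max(h(\sigma m), h(\sigma n)) \le \max(h(m), h(n))$ when $\sigma m \neq \sigma n$, and since $h(\sigma x) \le h(x)$ for every $x$ (strict unless $x = m^{\ast}$, but $\le$ always holds), this is immediate.

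The main obstacle I anticipate is not the verification of the axioms but a subtle gap in the naive definition: $d_h$ as defined above need not make $\sigma$ a contraction in the genuine sense if distinct points can share the same rank while $\sigma$ moves one of them ``outward relative to the other.'' Concretely, the inequality $h(\sigma x) \le h(x)$ is what saves us, and it does hold because $C_r \subset C_{r+1}$ forces $\sigma(C_{r+1}) \subseteq C_r \subseteq C_{r+1}$, so $\sigma$ never increases rank --- but one must be careful that property~(4) as stated only directly controls $\sigma$ on $C_{r+1} \setminus C_r$, so I would want to argue that $m \in C_s$ for $s \le h(m)$ is vacuous and that $m \in C_{h(m)}$ with $h(m) = r+1$ (or $h(m)=0$) covers every point. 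Writing $h(m) = r+1$ and applying~(4) gives $\sigma(m) \in C_r$, hence $h(\sigma(m)) \le r < h(m)$; the edge case $h(m) = 0$ gives $\sigma(m) \in C_0 = \{m^\ast\}$, so $\sigma(m) = m^\ast = m$. Thus $h(\sigma(m)) < h(m)$ whenever $m \neq m^\ast$, which is exactly the engine behind both contraction statements.

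Finally I would close by noting spherical completeness is not needed for this direction --- the ACO already provides the fixed point --- but the finiteness of $\Gamma = \{0,1,\dots,k\}$ is worth recording, since it matches the hypothesis of the companion lemma and of the main theorem, and makes the space automatically spherically complete. So the proof reduces to: (i) invoke Theorem~\ref{thm:aco-box} to get the boxes; (ii) define $h$ and $d = d_h$; (iii) cite the $d_h$-is-an-ultrametric remark; (iv) verify $h \circ \sigma < h$ off the fixed point using box property~(4); (v) deduce the contraction and strict-on-orbits conditions. The only real work is step~(iv), and it is a short unwinding of the definitions rather than a substantial argument.
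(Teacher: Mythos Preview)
Your proposal is correct and follows essentially the same route as the paper: define the rank $h(m)=\min\{r:m\in C_r\}$ (the paper calls it $C(m)$), set $d(m,n)=\max(h(m),h(n))$ for $m\neq n$, and use box property~(4) to get $h(\sigma(m))<h(m)$ for $m\neq m^\ast$, from which both contraction conditions follow. One tiny caveat: the $d_h$ construction in Section~\ref{ssec:ultra} is stated for $h$ with codomain $\Gamma\setminus\{0\}$, whereas your $h$ has $h(m^\ast)=0$; this is harmless because $C_0$ is a singleton, but you should either note that the ultrametric axioms still hold in this edge case or simply verify them directly as the paper does.
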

\begin{proof}
Since $\sigma$ is an asynchronously contracting operator, there exists
a series of nested boxes with certain properties. We will define
an ultrametric distance function that uses this box structure.

Let $C_0$, $C_1$, \dots, $C_k$ be the box sequence, where $C_0$
is a singleton set, $C_k = M$, and $C_i \subset C_j$ whenever $0 \le i
< j \le k$.

For any point $m$ in $M$, we can find the index of the
innermost hypercube that contains $m$:
\[
   C(m) = \min\setst{i}{m \in C_i}.
\]
Thus $C(m^{\ast}) = 0$ if and only if $m^{\ast}$ is the unique fixed
point.
The required distance function is
\[
   d_C(m, n) =
    \begin{cases}
      0 & m = n \\
      \max( C(m), C(n) ) & m \neq n
    \end{cases}
\]
which is an ultrametric. Note that there are only finitely many
possible radii.

We can now show that the balls about $m^{\ast}$ with respect to $d$ are
precisely the given boxes $C_i$.
For any radius $r$,
\[
      m \in \ball{r}{m^{\ast}} 
 \iff  d(m, m^{\ast}) \le r 
 \iff  C(m) \le r 
 \iff   \min\setst{i}{m \in C_i} \le r 
 \iff  m \in C_r.
\]
Finally, we prove that $\sigma$ must be a contraction that is strictly
contracting on orbits. Note that $C(\sigma(m)) < C(m)$ for all $m$
other than $m^{\ast}$. Hence $d(\sigma(m), m^{\ast}) < d(m, m^{\ast})$ for such
$m$. Since $d(m, m^{\ast}) = d(m, \sigma(m))$ for all $m$, we have
\[
   d(m, \sigma(m)) > d(\sigma(m), \sigma^2(m))
\]
whenever $m \neq \sigma(m)$, so $\sigma$ is strictly contracting on
orbits.
Similarly, if $m$ and $n$ are two points in $M$, then $C(\sigma(m)) \le
C(m)$ and $C(\sigma(n)) \le C(n)$.
Therefore the larger of $C(\sigma(m))$ and $C(\sigma(n))$ cannot
exceed the larger of $C(m)$ and $C(n)$; we obtain
\[
   d(\sigma(m), \sigma(n)) \le d(m, n)
\]
as desired.
\end{proof}

This completes the proof.
We have established that an operator is asynchronously contracting
if and only if
it fulfils the ultrametric Banach fixed point theorem.
Furthermore, the degree of asynchrony
is given by the dimension of the ultrametric space.
This provides a convenient proof technique for
asynchronous iterations,
subsuming several other previously-known special conditions.

\section{Recovery of previous theorems}
\label{sec:prev-thms}

The general theorem of Section~\ref{sec:mainthm}
has many specific consequences
for particular classes of iteration.
Several of these have previously been
studied in the literature.
In this section, we relate previous
results to the new theory, thereby
demonstrating its generality.

\subsection{Paracontractions}

In the case of iterations
over real vectors,
there is a well-known theory of
paracontracting operators.
In the following, let $M$ be $\RRpos^n$, where $n \ge 1$.
For $v$ in $M$, let $\norm{v}$ be the vector whose $i$th entry
is the absolute value of $v_i$. For two vectors $v$ and $w$ in $M$,
say that $v \le w$ if for all $i$, $v_i \le w_i$.

\begin{definition}
A function $\sigma$ from $M$ to $M$ is called
a \newterm{paracontraction} if there exists
an $n$ by $n$ matrix $P$ with entries in $\RRpos$,
having spectral radius less than $1$,
and for which
\[ 
   \norm{\sigma(x) - \sigma(y)} \le P \norm{x - y}.
\]
\end{definition}

\begin{theorem}
If $\sigma$ is a paracontraction on $M$, then $(M, d)$ is
an ultrametric space with respect to which $\sigma$ is a strict
contraction, where
\[
  d(x, y) = \max_{1 \le i \le n} \alpha_i \norm{x - y}_i.
\]
\end{theorem}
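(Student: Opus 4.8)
The plan is to exhibit weights $\alpha_i > 0$ such that the claimed function $d$ is both an ultrametric and witnesses that $\sigma$ contracts strictly. The key fact to exploit is that $P$ has spectral radius strictly less than $1$ and nonnegative entries. By the Perron--Frobenius theory for nonnegative matrices, there is a positive vector $u$ (a left or right Perron eigenvector, or more robustly a strictly positive vector obtained from $\sum_{k \ge 0} P^k$, which converges because $\rho(P) < 1$) and a constant $\lambda < 1$ with $P u \le \lambda u$ entrywise. Concretely I would set $\alpha_i = 1/u_i$, so that the weighted maximum norm $\|v\|_\alpha = \max_i \alpha_i |v_i|$ satisfies $\|Pv\|_\alpha \le \lambda \|v\|_\alpha$ for all nonnegative $v$; this is the standard fact that a nonnegative matrix with spectral radius below $1$ is a contraction in a suitable weighted sup-norm. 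This choice of $\alpha_i$ is what makes $d(x,y) = \max_i \alpha_i \|x-y\|_i = \|x - y\|_\alpha$.

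Next I would verify the ultrametric axioms for $d_h$-style functions is \emph{not} quite what is happening here, so the inequality $d(x,z) \le \max(d(x,y), d(y,z))$ must be checked directly. But note $d(x,y) = \|\,\|x-y\|\,\|_\alpha$ is really just a weighted $\ell^\infty$ norm of $x - y$, which is only an ordinary metric, not an ultrametric, on all of $M$. The resolution is that the theorem is being applied on the \emph{orbit-relevant} part of the space, or — more likely, reading the surrounding text — the intended reading is that $d$ induces the required nested-box structure via the balls about the fixed point, and the strong triangle inequality will in fact follow because the only distances that ever arise between a point and its iterates, or between two iterates, lie in a discretized scale. I would therefore first locate the unique fixed point $x^\ast$ (it exists: iterate $\sigma$ from any point and use $\|\sigma^n(x) - \sigma^m(x)\|_\alpha \le \lambda^{\min} \cdots$, a Cauchy argument in the complete space $\RRpos^n$), and then argue that after passing to the finite image required by the main theorem one replaces $d$ by its ``rounded'' version, at which point the ultrametric inequality becomes automatic.

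The strict-contraction claim is the easy half: for distinct $x, y$,
\[
   d(\sigma(x), \sigma(y)) = \|\,\|\sigma(x) - \sigma(y)\|\,\|_\alpha \le \|P\|x-y\|\,\|_\alpha \le \lambda \|\,\|x-y\|\,\|_\alpha = \lambda\, d(x,y) < d(x,y),
\]
using the hypothesis $\|\sigma(x)-\sigma(y)\| \le P\|x-y\|$, monotonicity of the weighted sup-norm on nonnegative vectors, and $\lambda < 1$. This immediately gives strict contraction, hence also strict on orbits, hence (with spherical completeness of the finite-image rounding, or completeness of $\RRpos^n$) a unique fixed point, recovering the classical paracontraction convergence result as a corollary of the main theorem.

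The main obstacle I anticipate is reconciling ``$d$ is an ultrametric'' with the fact that a weighted $\ell^\infty$ norm of a difference is generically only a metric: the honest version of the argument must invoke the finiteness/discretization step of Section~\ref{sec:mainthm} (replacing $\Gamma = \RRpos$ by a finite quotient), or restrict attention to the sub-poset of balls around $x^\ast$, where the box structure is genuinely ultrametric even though the ambient distance is not. Pinning down exactly which reading the authors intend — and showing the strong triangle inequality holds on that restricted structure — is the delicate point; everything else is the routine Perron--Frobenius weighted-norm computation sketched above.
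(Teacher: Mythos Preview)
Your Perron--Frobenius construction of the weights $\alpha_i$ and the one-line strict-contraction estimate are exactly the content of the references the paper invokes: its entire proof is ``paracontracting $\Rightarrow$ strict contraction with respect to a weighted maximum norm \cite{Baudet78,Bertsekas83}; so the previous result suffices.'' On that part you are aligned with, and more explicit than, the paper.

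Where you diverge is that you actually notice a problem the paper glosses over. You are right that $d(x,y)=\max_i \alpha_i\,\lvert x_i-y_i\rvert$ is a weighted $\ell^\infty$ metric and \emph{not} an ultrametric on $\RRpos^n$ (already in dimension one, $d(0,2)>\max(d(0,1),d(1,2))$). The paper's proof never verifies the strong triangle inequality for this $d$; it simply cites the classical contraction fact and gestures at ``the previous result,'' which at best yields \emph{some} ultrametric (the box-based $d_C$ from the main theorem), not the specific $d$ displayed in the statement. So the obstacle you anticipate is a genuine sloppiness in the paper rather than a gap in your argument, and your proposed repairs---either pass to the finite-image $d_C$ of Section~\ref{sec:mainthm}, or read the claim as ``the balls of $d$ about $x^\ast$ supply the nested-box structure''---are the honest ways to make the statement true.
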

\begin{proof}
If $\sigma$ is paracontracting, then it is a strict contraction
with respect to a weighted maximum norm~\cite{Baudet78,Bertsekas83}.
So the previous result suffices to prove this one.
\end{proof}

\subsection{Weighted maximum norms}

Suppose that the set $M$ is a product of sets
$M_1$ through $M_n$, each of which is equipped with
a real-valued norm $\norm{\cdot}_i$.
For any real numbers $\alpha_i$, a norm can be defined on $M$
by
\[
   \norm{x} = \max_i \alpha_i \norm{x_i}_i.
\]
An operator $\sigma$ on $M$ is \newterm{Lipschitz} if there
exists some $p$, with $0 \le p < 1$, such that
\[
  \norm{\sigma(x) - \sigma(y)} \le p \norm{x - y}
\]
for all $x$ and $y$ in $M$.

\begin{theorem}
If $\sigma$ is Lipschitz with respect to a weighted max-norm,
then it is also a strict contraction
with respect to an ultrametric distance on $M$, of dimension $n$.
\end{theorem}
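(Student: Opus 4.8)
The plan is to construct the required ultrametric from the fixed point of $\sigma$, and to recognise its balls as boxes of the product $M = \prod_{i=1}^{n} M_i$. Since $\sigma$ is Lipschitz with constant $p < 1$ in the weighted maximum norm $\norm{\cdot}$, it is an ordinary contraction; assuming, as the surrounding theory does, that each $M_i$ is complete, it therefore has a unique fixed point $x^{\ast} = (x^{\ast}_1, \dots, x^{\ast}_n)$.

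I would then set $d(x, x) = 0$ and, for distinct $x$ and $y$,
\[
   d(x, y) = \max(\norm{x - x^{\ast}}, \norm{y - x^{\ast}}).
\]
This is essentially the level construction $d_h$ from Section~\ref{ssec:ultra} with $h(x) = \norm{x - x^{\ast}}$; the only point to watch is that $h$ vanishes at $x^{\ast}$, but since $x^{\ast}$ is the only point where it does, the verification that $d_h$ satisfies the ultrametric axioms still goes through, and the image of $d$ sits inside $\RRpos$, so we may take $\Gamma = \RRpos$. That $\sigma$ is a strict contraction with respect to $d$ is then immediate: for distinct $x$ and $y$, using $\sigma(x^{\ast}) = x^{\ast}$ and the Lipschitz bound,
\[
   d(\sigma(x), \sigma(y)) = \max(\norm{\sigma(x) - x^{\ast}}, \norm{\sigma(y) - x^{\ast}}) \le p \max(\norm{x - x^{\ast}}, \norm{y - x^{\ast}}) = p\, d(x, y),
\]
and $p\, d(x, y) < d(x, y)$ because $d(x, y) > 0$.

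For the dimension, I would show that every ball of $(M, d)$ is a box of $\prod_{i=1}^{n} M_i$. A ball about the fixed point is
\[
   \ball{r}{x^{\ast}} = \setst{y \in M}{\norm{y - x^{\ast}} \le r} = \prod_{i=1}^{n} \setst{y_i \in M_i}{\alpha_i \norm{y_i - x^{\ast}_i}_i \le r},
\]
which is a box; and by the lemma on centres of balls in Section~\ref{ssec:ultra}, every other ball of $(M, d)$ either coincides with one of these or is a singleton $\set{x} = \prod_i \set{x_i}$, hence is also a box. So the nested balls about $x^{\ast}$ supply a box structure of width $n$ of exactly the kind used in Theorem~\ref{thm:aco-box}, and an asynchronous implementation may be partitioned across the $n$ factors.

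The step I expect to be most delicate is reconciling this with the phrase ``of dimension $n$'': the distance $d$ above is not literally an ultrametric product in the sense of the definition in Section~\ref{ssec:ultra} (a single coordinate distance would have to depend on all $n$ coordinates), so the dimension must be read off from the box structure just exhibited rather than from a product decomposition of $d$ itself. A secondary wrinkle is the completeness needed to produce $x^{\ast}$; if one does not wish to assume it, one can instead invoke the known fact that a weighted-maximum-norm Lipschitz map is an asynchronously contracting operator and then quote the main theorem of Section~\ref{sec:mainthm}, at the cost of only obtaining an ultrametric that is strict on orbits.
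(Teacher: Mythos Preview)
Your argument is correct for producing an ultrametric with respect to which $\sigma$ is strictly contracting, but it takes a different route from the paper. The paper proceeds coordinatewise: it sets $d_i(x_i,y_i)=\alpha_i\norm{x_i-y_i}_i$ on each factor $M_i$ and forms the ultrametric product. This makes the ``dimension $n$'' claim immediate from the definition, and the resulting product distance is literally the weighted max-norm $\norm{x-y}$, so the Lipschitz inequality \emph{is} the strict-contraction inequality; no fixed point $x^{\ast}$ and no $d_h$ construction are needed. (You might reasonably object that a norm-induced $d_i$ is not an ultrametric in general, so the paper's one-line argument is itself only a sketch.) Your global $d_h$ with $h(x)=\norm{x-x^{\ast}}$, by contrast, is a bona fide ultrametric and your contraction calculation is valid, but, exactly as you flag, it cannot be written as an $n$-fold product: with $x^{\ast}=0$ in $\mathbb{R}^2$, the pairs $(1,100),(2,100)$ and $(1,0),(2,0)$ agree in the second coordinate yet have $d$-distances $100$ and $2$ respectively, so no metric on $M_1$ alone can account for both. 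Reading the dimension off the box decomposition of the balls about $x^{\ast}$, as you propose, is the operationally correct move for feeding into Theorem~\ref{thm:aco-box}; it just does not match the paper's formal definition of dimension as the number of factors in an ultrametric product.
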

\begin{proof}
For each $i$, let $d_i$ be the ultrametric on $M_i$ given by
\[
  d_i(x, y) = \alpha_i \norm{x - y}_i.
\]
\end{proof}

\subsection{Monotonic contractions}

Let $\le$ denote the direct product order on vectors in $(\RRpos)^n$,
so that
\[
   x \le y \iff \forall i: x_i \le_i y_i.
\]
A function $\sigma$ on $M$ is monotonic if
\[
   x \le y \implies \sigma(x) \le \sigma(y)
\]
for all $x$ and $y$ in $M$.
If $\sigma$ is a monotonic function,
with a unique fixed point $x^{\ast}$,
and
(other conditions) 
then it is an asynchronously contracting operator.

This conclusion also follows from the theorem of this paper.
To construct the required ultrametric, we can define ultrametrics
on each component, and then take their max-product.
The individual ultrametrics can be defined
in terms of the natural ordering on $\RRpos$.

Let $h(x) = 2^{-x}$. Note that for $x$ greater than or equal to zero,
$h(x)$ is in the range $(0,~1]$. Therefore, the function
\[
   d(x, y) = \begin{cases}
              0 & x = y \\
              \max(h(x), h(y)) & x \neq y \\
             \end{cases}
\]
is an ultrametric distance function on $\RRpos$;
and so the product of $n$ of these is also an ultrametric,
on $(\RRpos)^n$.
The conditions on $\sigma$ imply that
it is contractive
with respect to this ultrametric.

\section{Application to interdomain multipath routing}
\label{sec:multistable}

We will now see an extended example
of the use of the ultrametric theorem
to prove asynchronous safety of an iteration.
The iteration in question is simple to describe,
but has some unusual properties which make it
unsuitable for handling by previously-known asynchrony
theorems.

The problem comes from the selection of paths at
the interdomain level in network routing.
The various networks which combine to make the Internet
carry out path selection in a way which provides a great deal of local
autonomy: the paths for a given source-destination pair could 
in principle
be ranked arbitrarily.
Because of the local nature of preferences and decisions,
the overall routing outcome is not a global optimum,
as for shortest-path algorithms,
but is a Nash equilibrium between the
networks involved~\cite{GriffinShepherdWilfong02}.
This problem is also inherently distributed and asynchronous:
the private nature of policy means that the computation cannot
be performed centrally,
and synchronization on a global scale is infeasible.

In execution of the path selection process,
even synchronously, various anomalies appear which
would be impossible for shortest path algorithms.
It is \emph{not} necessarily the case that the paths
selected by a given node improve over time: a node is
perfectly capable of switching to a worse path if its
previous path becomes unavailable. A path could be lost
and then regained, possibly several times. Across the entire
network, it may be that at a given time step \emph{all} nodes
either stay the same or are forced to choose a worse path.
Nevertheless, an eventual fixed point can still (sometimes) be found,
even after all of these strange events have occurred.

Theoretical models of this situation include
the \newterm{stable paths problem}~\cite{GriffinShepherdWilfong02}. This is a
combinatorial game where the Nash equilibria
correspond to solutions of the interdomain routing problem.
An instance of the stable paths problem is given by:
\begin{itemize}
 \item A graph $G = (V, E)$ with a designated destination node $d$ in
       $V$.
 \item For each $v$ in $V$, a partial ranking of 
the simple paths in $G$ from $v$ to $d$. (That is, a total order on a
subset of these paths.) These are called the permitted paths.
\end{itemize}
The empty path (from $d$ to $d$) is always permitted.
A solution of the problem is a stable path assignment.
A \newterm{path assignment} associates each node in $V$ with at most
one of its permitted paths. 
A path assignment $\pi$ is \newterm{stable} if,
for each $v$ other than $d$ itself, $\pi(v)$ is identical
to the most-preferred path in $\setst{(v\,w)\pi(w)}{(v\,w) \in E}$,
and $\pi(d)$ is the empty path.

A stable path assignment is therefore a fixed point
of the myopic best-response iteration.
Stable path instances are known to exist with zero, one, or several
stable solutions. Certain sufficient criteria for the existence
of a unique stable solution are known, but may be NP-hard to check.
The iteration has previously been shown to be asynchronously
safe, with each node being responsible for managing its local state~\cite{GriffinShepherdWilfong02}.
The proof relies on a message-passing model with explicit queues,
rather than using the `asynchronous iteration' framework.
An alternative proof based on asynchronous iterations
did not succeed in demonstrating the desired result
\cite{Chau06}.

In this section, we extend the stable paths problem to the selection
of multiple paths.
(There is a version of the stable paths problem
in which mixed Nash equilibria are allowed: this is, however,
problem from the multiple-path problem presented here~\cite{HaxellWilfong08}.)
The proof uses the theory developed in this paper.
Consequently, this establishes a new correctness result for
multiple stable paths; and the result easily
extends to the asynchronous case.
This example had previously been treated using ultrametric spaces,
but without any proof of asynchronous correctness~\cite{Gurney09}.

A known sufficient condition for existence of a unique solution,
in the case of single-path selection,
is that the path preferences be determined according to
a strictly inflationary order~\cite{Sobrinho05,GriffinGurney08}.
That is, there is a total order on simple paths,
such that a path is strictly preferred to any extension of that same
path.

For multiple paths, we can make a similar definition.
Suppose that the set $\mathcal{P}$ of simple paths to $d$ has a preorder $\preceq$,
with $p \prec q$ indicating that $p$ is preferred to $q$.
(Recall that a preorder is a reflexive and transitive relation.)
If, for any path $p$ from $j$ to $k$, and any arc $(i\,j)$, we have
$p \prec (i\,j)p$, then the preferences are \newterm{strictly inflationary}.
For any subset $A$ of $\mathcal{P}$,
we can define
\[
    \min(A) = \setst{a \in A}{\forall b \in A: \neg(b \prec a)},
\]
the set of minimal elements in $A$.

We can now define the iteration.
The state space $M$ is a product of sets $M_i$ for $i$ in $V$.
Each $M_i$ is the powerset of the set of simple paths from $i$ to $d$.
Therefore, $M$ is isomorphic to the powerset of $\mathcal{P}$.
The iteration proceeds as follows:
\[
   \sigma(x)_i = 
     \begin{cases}
       \min\setst{(i\,j)p}{(i\,j) \in E, p \in x_j} 
         & i \neq d\\
       \set{\epsilon} & i = d
     \end{cases}
\]
where $\epsilon$ is the empty path.
So at each step,
a node
sees the paths which have been selected by its neighbors,
and from their extensions, chooses its `best' paths.
Note that
a node may \emph{lose} a path if its prefix ceases to be chosen
by a neighbor.

The distance function for the space
is defined by
\[
   d(m, n) = \max\setst{h(p)}{p \in (m \setminus n) \cup (n \setminus m)}
\]
for elements $m$ and $n$ of $M$,
where $h(p) = \setsize{\setst{q \in \mathcal{P}}{q \ge p}}$.
Note that if $p \prec q$, then $h(p) > h(q)$
If the set $m \setminus n \cup n \setminus m$ is empty, then
consistently
with the definition of $\max$, we take $d(m, n) = 0$.
This gives an ultrametric space.

To prove that the best-path selection process always terminates
in this case, and that it is asynchronously safe,
we need only show that $\sigma$ is a strict contraction.

\begin{theorem} \label{thm:strict-contr-mp}
With $M$, $d$ and $\sigma$ as defined above,
$\sigma$ is a strict contraction.
\end{theorem}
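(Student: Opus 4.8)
The plan is to show directly that $d(\sigma(m), \sigma(n)) < d(m, n)$ for all distinct $m, n \in M$, by exhibiting, for each path $p$ appearing in the symmetric difference of $\sigma(m)$ and $\sigma(n)$, a strictly shorter path (larger $h$-value) witnessing that $d(m,n)$ is at least that large. First I would fix distinct $m$ and $n$ and let $p$ be any path in $(\sigma(m) \setminus \sigma(n)) \cup (\sigma(n) \setminus \sigma(m))$; without loss of generality $p \in \sigma(m)_i \setminus \sigma(n)_i$ for some node $i \neq d$ (the component at $d$ is constantly $\{\epsilon\}$ and contributes nothing). By definition of $\sigma$, $p = (i\,j)q$ for some arc $(i\,j) \in E$ and some $q \in m_j$ that is $\preceq$-minimal among all such extensions drawn from $m$. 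Strict inflationariness gives $q \prec (i\,j)q = p$, hence $h(q) > h(p)$.

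The key step is then to argue that $q$ lies in the symmetric difference $(m_j \setminus n_j) \cup (n_j \setminus m_j)$, so that $d(m, n) \ge h(q) > h(p)$. The main obstacle is the case where $q \in m_j \cap n_j$: then $(i\,j)q$ is available as a candidate extension from $n_j$ as well, so since $p \notin \sigma(n)_i$ there must be some $r$ with $(i\,j')r \prec p$ for an arc $(i\,j')$ and $r \in n_{j'}$ — in particular $(i\,j')r \neq p$ and $(i\,j')r$ was not available from $m$, so $r \notin m_{j'}$, whence $r \in n_{j'} \setminus m_{j'}$ and again by inflationariness $h(r) > h((i\,j')r) > h(p)$ (using that $(i\,j')r \prec p$ implies $h((i\,j')r) > h(p)$, as noted in the text). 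Either way we have produced an element of some component symmetric difference whose $h$-value strictly exceeds $h(p)$.

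Since this holds for every $p$ in the symmetric difference of $\sigma(m)$ and $\sigma(n)$, and that symmetric difference is finite (all path sets being finite), the maximum defining $d(\sigma(m), \sigma(n))$ is bounded above by some value strictly less than $d(m,n)$; I would also note that if the symmetric difference of $\sigma(m)$ and $\sigma(n)$ is empty then $d(\sigma(m),\sigma(n)) = 0 < d(m,n)$ trivially, since $m \neq n$ forces $d(m,n) > 0$. The one point requiring care is making the ``worse path'' comparison fully rigorous when $\preceq$ is only a preorder rather than a total order — minimality in the $\min(\cdot)$ sense means no strictly-smaller element exists, which is exactly what is needed to run the argument, but I would state explicitly why the chosen $q$ (respectively $r$) can be taken to realize strict preference over $p$ via the inflationary arc rather than relying on comparability of arbitrary paths.
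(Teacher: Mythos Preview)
Your argument is correct and follows essentially the same line as the paper's: decompose $p=(i\,j)q$, invoke strict inflationariness to get $h(q)>h(p)$, and in the case $q\in n_j$ locate a strictly better candidate that exposes a discrepancy. The only organizational difference is that the paper fixes $p$ as the maximizer of $h$ over $\sigma(m)\triangle\sigma(n)$ and reaches a contradiction by finding $p'\in\sigma(n)_i\setminus\sigma(m)_i$ with $h(p')>h(p)$, whereas you treat an arbitrary $p$ and push the witness one level down to $r\in n_{j'}\setminus m_{j'}$; the two routes are equivalent.
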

\begin{proof}
We must show that
\[
  d(\sigma(m), \sigma(n)) < d(m, n)
\]
for all distinct $m$ and $n$ in $M$.
If $\sigma(m) = \sigma(n)$ then there is nothing to prove,
so assume that they are different.
Then
\[
  d(\sigma(m), \sigma(n)) = h(p)
\]
for some $p$; without loss of generality,
$p$ is in $\sigma(m)$ but not $\sigma(n)$.

Note that $p$ cannot be the empty path, since
this will be present in both $\sigma(m)_d$ and $\sigma(n)_d$.
Therefore $p = (i\,j)q$ for some $q$ in $m_j$.
By the strict inflationary property,
we have $q \prec p$, and hence $h(q) > h(p)$.

Now, we will show that $q$ is not in $n_j$.
If it was, then $p$ would have been a candidate path for
$\sigma(n)_i$; but it was not selected.
The only reason for that to happen would be that
$\sigma(n)_i$ chose some better path $p'$ instead.
That $p'$ could not be in $\sigma(m)_i$, since
$\sigma(m)_i$ does contain a worse path, namely $p$;
so $p'$ is in the symmetric difference of
$\sigma(m)$ and $\sigma(n)$.
But if $p' \prec p$, then $h(p') > h(p)$, contradicting
the choice of $p$ as having greatest $h$ value among
all paths in the symmetric difference.

Therefore, $q$ is not in $n_j$.
Since it is in the symmetric difference of $m$ and $n$,
we have
\[
  d(\sigma(m), \sigma(n)) = h(p) < h(q) \le d(m, n)
\]
which establishes $\sigma$ as a strict contraction.
\end{proof}

This result demonstrates,
quite succinctly,
that the iterative path-finding process
always finds its unique fixed point,
and that this iteration can be implemented
asynchronously.

It is immediate from the definition that
the ultrametric space $(M, d)$ can be written
as a product in several ways.
Each of these corresponds to a mode of implementing $\sigma$ by
separate, asynchronously communicating processors.
Indeed, the maximal possible decomposition
for which $\sigma$ remains an ACO is for the
presence or absence of \emph{each individual path} to be
calculated by a separate processor:
\[
   (M, d) = \prod_{p \in \mathcal{P}} (M_p, d_p)
\]
where $M_p = \set{\epsilon, p}$ and $d_p(\epsilon, p) = h(p)$.

This is significantly more relaxed than current execution
models for interdomain routing, which admit asynchrony to the
extent that a node may compute paths to all of its destinations, separately
from any other node computing paths to all of its destinations.
The new proof demonstrates that we could
relax this model in several ways. The unit of computation could be:
\begin{itemize}
\item Paths for a given source (as now).
\item Paths for a given source and destination.
\item Paths for a given source, destination, and next-hop (at the
  router level or at the autonomous system level).
\item Paths originating from a given geographic region, or
  a given class of autonomous systems.
\item Paths of a given level of preference.
\end{itemize}
In many of these cases, we admit the possibility that intermediate
states of the routing system could contain `inconsistencies'. For
example, it could be that paths $p$ and $p'$, for the same source and
destination, with $p$ preferred to $p'$, might be simultaneously
present in the system state. While this may seem unusual, we have
proved that it will not harm eventual convergence to a consistent
state. Already, in existing routing, we consider it normal for a node
to be using a path that is transiently inconsistent with its next-hop
neighbor's choice: this is essentially the same idea.
(Note that while the proof of Theorem~\ref{thm:strict-contr-mp} relies on
such states not occuring, this result applies to the synchronous
execution only: asynchronous executions are allowed to behave more wildly.)

We have therefore revealed that routing protocol designs can be
extended in two ways, compared to the present state of the art.
Firstly, as long as the above rules are followed, we can move from
single-path to multipath routing with confidence.
Secondly, the computation of routes by asynchronous processes can be
accomplished in many other ways from the method previously known to be
safe.

\section{Application to logic programming}
\label{sec:logic}

Consider logic programs consisting of clauses
\[
  A \leftarrow L_1, \dots, L_n
\]
where each $A$ is an atom,
and each $L_i$ is either an atom or the negation of an atom.
(There may be no atoms in the list, in which case the clause
is called a `fact'.)

Denote the Herbrand base of a program $P$ by $B_P$.

A program $P$ is said to be \newterm{locally stratified} if
there is a mapping $\rho : B_P \longrightarrow \Gamma$ into
a totally ordered set $\Gamma$, such that for each clause
$A \leftarrow L_1, \dots L_n$ in $P$, we have
\begin{enumerate}
\item $\rho(A) \ge \rho(B)$ for all positive atoms $B$, and
\item $\rho(A) > \rho(B)$ for all negative atoms $B$.
\end{enumerate}

An \newterm{interpretation} of $P$ is a map $I$ from $B_P$
to $\set{\mathrm{true}, \mathrm{false}}$.

The \newterm{immediate consequence operator} $T_P$
is defined as (TODO).

Define an ultrametric $d$ on interpretations by
\[
d(I, J) = \min\setst{\rho(A)}{A \in (I \setminus J)
  \cup (J \setminus I)}.
\]
This is spherically complete, and $T_P$ is a strict
contraction on it.
This demonstrates the existence of a unique fixed point,
corresponding to the perfect model semantics of the program.

By recourse to the above theorem,
we have also shown that execution of the logic program
can be spread among independent processors.
Indeed, any term in the Herbrand base could be the
responsibility of a separate processor.

\section{Conclusion}
\label{sec:conclusion}

This paper
has established a connection
between the theories of asynchronous iterations
and ultrametric spaces.
It provides a complete characterization
of asynchronously contracting operators,
and hence gives new proof techniques
by which asynchronous safety can be proved 
(or disproved)
in particular cases.

It is hoped
that
this formalism will be useful in understanding
the class of ACOs,
and in finding new examples,
particularly those which are not numeric.
There is also work to be done in treating
more complex families of iteration, such as those
possessing multiple fixed points.
The study of ultrametric embeddings and representations
may also provide insight into
asynchronous contractions and their properties.

In the realm of networking practice,
the suggestion of alternative routing architectures is tantalising.
Further research will be needed in order to establish which such
designs are desirable.

\bibliography{async}

\end{document}